 \let\bbordermatrix\bordermatrix
\patchcmd{\bbordermatrix}{8.75}{4.75}{}{}
\patchcmd{\bbordermatrix}{\left(}{\left[}{}{}
\patchcmd{\bbordermatrix}{\right)}{\right]}{}{}
\newtheorem{theorem}{Theorem}
\newtheorem{definition}{Definition}
\newtheorem{corollary}{Corollary}
\newtheorem{remark}{Remark}
\newcommand{\sr}{\stackrel}
\newcommand{\tri}{\sr{\triangle}{=}}
\newcommand{\noi}{\noindent}
\newcommand{\be}{\begin{equation}}
\newcommand{\ee}{\end{equation}}
\newcommand{\bea}{\begin{eqnarray}}
\newcommand{\eea}{\end{eqnarray}}
\newcommand{\bes}{\begin{eqnarray*}}
\newcommand{\ees}{\end{eqnarray*}}
\newcommand{\bfi}{\begin{figure}}
\newcommand{\bfit}{\begin{figure}[t]}
\newcommand{\bfib}{\begin{figure}[b]}
\newcommand{\bfih}{\begin{figure}[h]}
\newcommand{\bfip}{\begin{figure}[p]}
\newcommand{\efi}{\end{figure}}
\newcommand{\bi}{\begin{itemize}}
\newcommand{\ei}{\end{itemize}}
\newcommand{\ben}{\begin{enumerate}}
\newcommand{\een}{\end{enumerate}}
\begin{document}

\sloppy

\title{Applications of Information Nonanticipative Rate Distortion Function}

\author{
  \IEEEauthorblockN{Photios A. Stavrou, Christos K. Kourtellaris, and Charalambos D. Charalambous}
  \IEEEauthorblockA{ECE Department, University of Cyprus, Nicosia, Cyprus\\
    {\it Email:\{stavrou.fotios,kourtellaris.christos,chadcha\}@ucy.ac.cy}} 
}

\maketitle

\begin{abstract}
The objective of this paper is to further investigate  various applications of information Nonanticipative Rate Distortion Function (NRDF) by discussing  two working examples, the Binary Symmetric Markov Source with parameter $p$ (BSMS($p$)) with Hamming distance distortion, and the multidimensional partially observed Gaussian-Markov source. For the BSMS($p$), we give the solution to the NRDF, and we use it to compute the Rate Loss (RL) of causal codes with respect to noncausal codes. For the multidimensional Gaussian-Markov source, we give the solution to the NRDF, we show  its operational meaning via joint source-channel matching over a vector of parallel Gaussian channels, and we compute the RL of causal and zero-delay codes with respect to noncausal codes.   
\end{abstract}


\section{Introduction}\label{introduction}
\par In this paper, we consider an information theoretic measure called Nonanticipative Rate Distortion Function (NRDF) \cite{stavrou-charalambous2013isit,charalambous-stavrou-ahmed2014a} which is a variation of the classical RDF \cite{berger}, and we discuss some of its applications in problems on information theory.  In \cite{stavrou-charalambous2013isit}, it is pointed out that the information NRDF and nonanticipatory $\epsilon$-entropy introduced in \cite{gorbunov-pinsker} to facilitate real-time applications are equivalent notions, and a variational equality is derived and utilized to introduce a Blahut-Arimoto Algorithm (BAA) to iteratively compute the information NRDF. In addition, existence of the optimal nonanticipative reproduction conditional distribution  is shown, under the topology of weak convergence of probability measures, while in \cite{charalambous-stavrou-ahmed2014a}, the closed form expression of the optimal reproduction conditional distribution for stationary processes is derived. Moreover, in \cite{charalambous-stavrou-ahmed2014a}, the realization of the optimal reproduction distribution of the information NRDF  is discussed (see Fig.~\ref{discrete_time_communication_system}) in the context of filtering applications with  fidelity constraints.\\ 
In this paper, we present results in the following directions.
\vspace*{0.2cm}\\
\noi{\bf (R1)} Compute  the NRDF in closed form for two examples of sources with memory: (a) the {\it Binary Symmetric Markov Source with parameter $p$ with Hamming distortion} (BSMS($p$)), for which the classical RDF is only  known in the distortion region $0\leq{D}\leq{D}_c$ \cite{gray1971}, while for the rest $D_c\leq{D}\leq\frac{1}{2}$ only  upper and lower  bounds are known \cite{berger1977}. We show that the solution of the NRDF is a tight upper bound for $D_c\leq{D}\leq\frac{1}{2}$, and performs much more reliably in comparison to the upper bound found in \cite{berger1977}; (b) the {\it multidimensional Gaussian-Markov source}, for which only upper bounds are known, since no closed form expression is given in the literature apart from the first-order (scalar) Gauss-Markov sources \cite[Th. 3]{derpich-ostergaard2012}.\\
\noi{\bf (R2)} Compute the Rate Loss (RL) of causal codes, that is, the gap between the Optimal Performance Theoretically Attainable (OPTA) by causal codes with respect to noncausal codes for the BSMS($p$).\\
\noi{\bf (R3)} Compute the RL of causal and zero-delay codes with respect to noncausal codes for the multidimensional Gaussian-Markov source, and show achievability of the NRDF using symbol-by-symbol transmission \cite{gastpar2003}.\\
\noi{\bf (R4)} Provide an alternative characterization of the closed form expression to the information NRDF, from which a lower bound on the NRDF similar to Shannon's Lower Bound (SLB) \cite[Ch. 4]{berger} can be derived, for any source with memory, including Gaussian-Markov sources. This bound is  utilized in the derivation of the closed form expression of the multidimensional Gaussian-Markov source.
\vspace*{0.2cm}\\
\noi The alternative characterization of the solution  to the information NRDF (see Theorem~\ref{alternative_expression}) is the analogue of the single letter characterization of  the classical RDF of  discrete memoryless  sources, often used to  facilitate the computation of the classical BAA  \cite[Th. 6.3.9]{blahut1987}.\\
\noi Finally, we point out that the multidimensional Gaussian-Markov source example is a generalization to arbitrary dimensions of the example considered in \cite[Cor. 1.2]{ma-ishwar2011} for systems with low delay tolerance at both the encoder and decoder, such as, the classical Differential Predictive Coded Modulation (DPCM) system\cite{berger}, often applied to compression applications  of video, audio, image, and speech coding.


\section{NRDF on Abstract Spaces}\label{nonanticipative_rdf}
\par In this section, we define the information NRDF by adopting the general mathematical framework described in \cite{stavrou-charalambous2013isit}. \\
\noi {\bf Notation.} Let $\mathbb{N} \tri \{0,1,\ldots\}$. Introduce two sequence of spaces $\{({\cal X}_n,{\cal B}({\cal X }_n)):n\in\mathbb{N}\}$ and $\{({\cal Y}_n,{\cal B}({\cal Y}_n)):n\in\mathbb{N}\},$ where ${\cal X}_n,{\cal Y}_n, n\in\mathbb{N}$, are Polish spaces, and ${\cal B}({\cal X}_n)$ and ${\cal B}({\cal Y}_n)$ are Borel $\sigma-$algebras of subsets of ${\cal X}_n$ and ${\cal Y}_n$, respectively. Points in ${\cal X}^{\mathbb{N}}\tri{{\times}_{n\in\mathbb{N}}}{\cal X}_n$ are denoted by ${\bf x}\tri\{x_0,x_1,\ldots\}\in{\cal X}^{\mathbb{N}}$, while their restrictions to finite coordinates are denoted by $x^n\tri\{x_0,x_1,\ldots,x_n\}\in{\cal X}_{0,n},$ for $n\in\mathbb{N}$, and similarly of ${\cal Y}_n$. Let ${\cal B}({\cal X}^{\mathbb{N}})\tri\odot_{i\in\mathbb{N}}{\cal B}({\cal X}_i)$ denote the $\sigma-$algebra on ${\cal X}^{\mathbb{N}}$ generated by cylinder sets and similarly for ${\cal B}({\cal Y}^{\mathbb{N}})\tri\odot_{i\in\mathbb{N}}{\cal B}({\cal Y}_i)$, while ${\cal B}({\cal X}_{0,n})$ and ${\cal B}({\cal Y}_{0,n})$ denote the $\sigma-$algebras with bases over $A_i\in{\cal B}({\cal X}_i)$, and $B_i\in{\cal B}({\cal Y}_i),~i=0,1,\ldots,n$, respectively. Let ${\cal Q}({\cal Y};{\cal X})$ denote the set of stochastic kernels on ${\cal Y}$ given ${\cal X}$ and ${\cal M}({\cal X})$ the set of probability measures on ${\cal X}$.\\
\noi{\bf Source Distribution.} Consider the sequence of source distributions $\{p_n(dx_n;x^{n-1}):n\in\mathbb{N}\}$, where $p_n(\cdot;\cdot)\in{\cal Q}({\cal X}_n;{\cal X}_{0,n-1})$. For $A\in{\cal B}({\cal X}_{0,n})$ a cylinder set of the form $
A\tri\big\{{\bf x}\in{\cal X}^{\mathbb{N}}:x_0\in{A_0},x_1\in{A_1},\ldots,x_n\in{A_n}\big\},~A_i\in{\cal B}({\cal X }_i),~i=0,1,\ldots,n$, we define ${\bf P}(\cdot)$ on ${\cal B}({\cal X}^{\mathbb{N}})$ by
\begin{eqnarray}
{\bf P}(A)\tri\int_{A_0}p_0(dx_0)\ldots\int_{A_n}p_n(dx_n;x^{n-1})\equiv{\mu}_{0,n}(A_{0,n})\label{equation2}
\end{eqnarray}
where $A_{0,n}=\times_{i=0}^n{A_i}$, and ${\mu}_{0,n}(\cdot)$ denotes the restriction of the measure ${\bf P}(\cdot)$ on cylinder sets $A\in{\cal B}({\cal X}_{0,n})$, for $n\in\mathbb{N}$.\\
\noi{\bf Reproduction Distribution.} Consider the sequence of reproduction distributions $\{q_n(dy_n;y^{n-1},x^n):n\in\mathbb{N}\}$, where $q_n(\cdot;\cdot,\cdot)\in{\cal Q}({\cal Y}_n;{\cal Y}_{0,n-1}\times{\cal X}_{0,n})$. For a cylinder set $B\tri\big\{{\bf y}\in{\cal Y}^{\mathbb{N}}:y_0{\in}B_0,y_1{\in}B_1,\ldots,y_n{\in}B_n\big\}$, we define ${\bf Q}(\cdot|{\bf x})$ on ${\cal B}({\cal Y}^{\mathbb{N}})$ by
\begin{align}
{\bf Q}(B|{\bf x})
&\tri\int_{B_0}q_0(dy_0;x_0)\ldots\int_{B_n}q_n(dy_n;y^{n-1},x^n)\label{equation4}\\
&\equiv{\overrightarrow{Q}}_{0,n}(B_{0,n}|x^n),~B_{0,n}\in{\cal B}({\cal Y}_{0,n}).\label{equation4b}
\end{align}
\noi For Polish spaces, it can be shown \cite[Sec. II]{charalambous-stavrou2012isit} that any family of measures ${\bf Q}(\cdot|{\bf x})$ on ${\cal B}({\cal Y}^{\mathbb{N}})$ defined by (\ref{equation4}) is equivalent to a family of measures ${\bf Q}(\cdot|{\bf x})$ on $({\cal Y}^{\mathbb{N}},{\cal B}({\cal Y}^{\mathbb{N}}))$ satisfying the following consistency condition.\\ 
{\bf C1}: If $D\in{\cal B}({\cal Y}_{0,n}),$ then ${\bf Q}(D|{\bf x})$ is ${\cal B}({\cal X}_{0,n})-$measurable function of ${\bf x}\in{\cal X}^{\mathbb{N}}$.\\
\noi We denote the set of measures satisfying {\bf C1} by ${\cal Q}^{\bf C1}({\cal Y}^{\mathbb{N}};{\cal X}^{\mathbb{N}})\subseteq{\cal Q}({\cal Y}^{\mathbb{N}};{\cal X}^{\mathbb{N}})$.\\ 
Indeed, for any family of measures ${\bf Q}(\cdot|{\bf x})$ on $({\cal Y}^{\mathbb{N}},{\cal B}({\cal Y}^{\mathbb{N}}))$ satisfying consistency condition {\bf C1} one can construct a collection of probability distributions $\{q_n(dy_n;y^{n-1},x^n):n\in\mathbb{N}\}$ which are connected to ${\bf Q}(\cdot|{\bf x})$ via relation (\ref{equation4}) \cite[Sec. II]{charalambous-stavrou2012isit}. Here, $\overrightarrow{Q}_{0,n}(\cdot|x^n)\in{\cal Q}^{\bf C1}({\cal Y}_{0,n};{\cal X}_{0,n})$ denotes the restriction of ${\bf Q}(\cdot|{\bf x})\in{\cal Q}^{\bf C1}({\cal Y}^{\mathbb{N}};{\cal X}^{\mathbb{N}})$ to finite coordinates.
\par Next, we  introduce the precise definition of information NRDF by using relative entropy. Given  ${\bf P}(\cdot)\in{\cal M}({\cal X}^{\mathbb{N}})$ and  ${\bf Q}(\cdot|\cdot)\in{\cal Q}^{\bf C1}({\cal Y}^{\mathbb{N}};{\cal X}^{\mathbb{N}})$ we define {\it the joint distribution} on ${\cal X}^{\mathbb{N}}\times{\cal Y}^{\mathbb{N}}$ by $P_{0,n}(dx^n,dy^n)\tri({\mu}_{0,n}\otimes{\overrightarrow Q}_{0,n})(dx^n,dy^n)$, {\it the marginal distribution} on ${\cal Y}^{\mathbb{N}}$ by $\nu_{0,n}(dy^n)\tri({\mu}_{0,n}\otimes{\overrightarrow Q}_{0,n})({\cal X}_{0,n},dy^n)$, and {\it the product distribution} ${\overrightarrow\Pi}_{0,n}:{\cal B}({\cal X}_{0,n})\odot{\cal B}({\cal Y}_{0,n})\mapsto[0,1]$ by
\begin{align}
&{\overrightarrow\Pi}_{0,n}(dx^n,dy^n)\tri({\mu}_{0,n}\times\nu_{0,n})(dx^n,dy^n)\nonumber\\
&\tri\int_{A_0}p_0(dx_0)\ldots
\int_{A_n}p_n(x_n;x^{n-1})\int_{B_n}\nu_n(dy_n;y^{n-1}).\nonumber
\end{align}
\noi The information theoretic measure of interest is a special case of directed information \cite[Sec.~IV]{stavrou-charalambous2013isit} defined by relative entropy $\mathbb{D}(\cdot||\cdot)$
\begin{align}
&I_{\mu_{0,n}}(X^n\rightarrow{Y}^n)\tri\mathbb{D}({\mu}_{0,n} \otimes {\overrightarrow Q}_{0,n}||{\overrightarrow\Pi}_{0,n})\label{equation33}\\
&=\int \log \Big( \frac{{\overrightarrow Q}_{0,n}(d y^n|x^n)}{\nu_{0,n}(dy^n)}\Big)({\mu}_{0,n}\otimes {\overrightarrow Q}_{0,n})(dx^n,dy^n)\label{equation203}\\
&\equiv{\mathbb{I}}_{X^n\rightarrow{Y^n}}({\mu}_{0,n}, {\overrightarrow Q}_{0,n}).\label{equation7a}
\end{align}
\noi The notation ${\mathbb{I}}_{X^n\rightarrow{Y^n}}(\cdot,\cdot)$ indicates the functional dependence of $I_{\mu_{0,n}}(X^n\rightarrow{Y^n})$ on $\{{\mu}_{0,n}, {\overrightarrow Q}_{0,n}\}$. Consider a measurable distortion function $d_{0,n}(x^n,y^n):{\cal X}_{0,n}\times{\cal Y}_{0,n}\mapsto[0,\infty]$,~$d_{0,n}=\sum_{i=0}^n\rho(x_i,y_i)$, and define the fidelity of reproduction by
\begin{align}
&{\cal Q}^{\bf C1}_{0,n}(D)\tri\Big\{\overrightarrow{Q}_{0,n}(\cdot|x^n)\in{\cal Q}^{\bf C1}({\cal Y}_{0,n};{\cal X}_{0,n}):\nonumber\\
&\frac{1}{n+1}\int d_{0,n}({x^n},{y^n})(\mu_{0,n}\otimes\overrightarrow{Q}_{0,n})(d{x}^{n},d{y}^{n})\leq D\Big\},~D\geq0.\nonumber
\end{align}
\noi Next, we define the information NRDF.
\begin{definition}(Information NRDF) The information NRDF is 
\begin{align}
{R}^{na}_{0,n}(D) \tri  \inf_{{\overrightarrow{Q}_{0,n}(\cdot|x^n)\in{\cal Q}^{\bf C1}_{0,n}(D)}}\mathbb{I}_{X^n\rightarrow{Y^n}}(\mu_{0,n},{\overrightarrow Q}_{0,n}).
\label{ex12}
\end{align}
If the infimum over ${\cal Q}^{\bf C1}_{0,n}(D)$ in (\ref{ex12}) does not exist then we set ${R}^{na}_{0,n}(D)=\infty$. The information NRDF rate is  
\begin{align}
{R}^{na}(D)=\lim_{n\longrightarrow\infty}\frac{1}{n+1}{R}^{na}_{0,n}(D)\label{equation22}
\end{align}
provided the limit on the right hand side (RHS) of (\ref{equation22}) exists (if not we use $\limsup_{n\rightarrow\infty}$ ). If the infimum over ${\cal Q}^{\bf C1}_{0,n}(D)$ does not exist then we set ${R}^{na}(D)=\infty$.
\end{definition}
Note that $R_{0,n}^{na}(D)$ is also related to classical RDF \cite{berger}, denoted by $R_{0,n}(D)$, as follows. Let 
${\cal Q}_{0,n}(D)=\big\{P_{Y^n|X^n}(\cdot|x^n):\frac{1}{n+1}\int{d}_{0,n}(x^n,y^n){P}_{Y^n|X^n}(dy^n|x^n)\otimes{P}_{X^n}(dx^n)\leq{D}\big\},D\geq{0}$, then
\begin{align*}
&R_{0,n}(D)=\inf_{P_{Y^n|X^n}(\cdot|x^n)\in{\cal Q}_{0,n}(D)}\mathbb{D}(P_{Y^n|X^n}\otimes{P}_{X^n}||P_{Y^n}\times{P}_{X^n})\nonumber\\
&\stackrel{(a)}\leq\inf_{P_{Y^n|X^n}(\cdot|x^n)\in{\cal Q}_{0,n}(D)\cap{\cal Q}^{\bf C1}_{0,n}(D)}\mathbb{D}(P_{Y^n|X^n}\otimes{P}_{X^n}||P_{Y^n}\times{P}_{X^n}).
\end{align*}
\noi For memoryless sources, $(a)$ holds with equality.

\section{Optimization of NRDF and Properties}\label{existence}

\par In this section, we state conditions for the existence of solution to the extremum problem (\ref{ex12}), we give the optimal reproduction minimizing (\ref{ex12}) and some of its properties. These results are used when  we discuss the various applications.\\ 
\noi The following existence result is outlined in \cite{stavrou-charalambous2013isit}, while a complete derivation is given in \cite[Sec. III]{stavrou-kourtellaris-charalambous2014aaa}.
\begin{theorem}\cite[Sec. III]{stavrou-kourtellaris-charalambous2014aaa}(Existence)\label{existence_rd}
Suppose {\bf(A1)} ${\cal Y}_{0,n}$ is a compact; {\bf(A2)} for all $h(\cdot){\in}BC({\cal Y}_{n})$,  $(x^{n},y^{n-1})\in{\cal X}_{0,n}\times{\cal Y}_{0,n-1}\mapsto\int_{{\cal Y}_n}h(y)P_{Y|Y^{n-1},X^n}(dy|y^{n-1},x^n)\in\mathbb{R}$ 
is continuous jointly in  $(x^{n},y^{n-1})\in{\cal X}_{0,n}\times{\cal Y}_{0,n-1}$; {\bf(A3)} $d_{0,n}(x^n,\cdot)$ is continuous on ${\cal Y}_{0,n}$; {\bf(A4)} There exist  $(x^n,y^{n})\in{\cal X}_{0,n}\times{\cal Y}_{0,n}$ such that  $d_{0,n}(x^n,y^{n})<D$.\\
Then the infimum in ${R}^{na}_{0,n}(D)$ is achieved by some $\overrightarrow{Q}^*_{0,n}(dy^n|x^n)\in{\cal Q}^{\bf C1}_{0,n}(D)$. 
\end{theorem}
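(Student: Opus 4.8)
\noi The plan is to prove the claim by the direct method of the calculus of variations: put a topology on the admissible reproduction kernels for which the constraint set ${\cal Q}^{\bf C1}_{0,n}(D)$ is nonempty and (sequentially) compact and the functional $\overrightarrow{Q}_{0,n}\mapsto\mathbb{I}_{X^n\rightarrow{Y^n}}(\mu_{0,n},\overrightarrow{Q}_{0,n})$ is lower semicontinuous, and then invoke the Weierstrass extreme value theorem. Since $\mu_{0,n}$ is held fixed, the natural choice is to transport the topology of weak convergence of probability measures from the induced joint laws to the kernels: declare $\overrightarrow{Q}_{0,n}^{(k)}\to\overrightarrow{Q}_{0,n}$ whenever $\mu_{0,n}\otimes\overrightarrow{Q}_{0,n}^{(k)}\to\mu_{0,n}\otimes\overrightarrow{Q}_{0,n}$ weakly on ${\cal X}_{0,n}\times{\cal Y}_{0,n}$; this topology is metrizable because the spaces are Polish.

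\noi For compactness, observe that every joint law $\mu_{0,n}\otimes\overrightarrow{Q}_{0,n}$ has ${\cal X}_{0,n}$-marginal equal to $\mu_{0,n}$, which is tight as a Borel measure on a Polish space, while by {\bf (A1)} its ${\cal Y}_{0,n}$-marginal is supported on a compact set and hence automatically tight. Therefore the family $\{\mu_{0,n}\otimes\overrightarrow{Q}_{0,n}:\overrightarrow{Q}_{0,n}\in{\cal Q}^{\bf C1}_{0,n}(D)\}$ is tight, and by Prokhorov's theorem it is relatively weakly compact. Taking a minimizing sequence, extract a weakly convergent subsequence; its limit still has ${\cal X}_{0,n}$-marginal $\mu_{0,n}$, so by existence of regular conditional probabilities on Polish spaces it disintegrates as $\mu_{0,n}\otimes\overrightarrow{Q}^{*}_{0,n}$ for some stochastic kernel $\overrightarrow{Q}^{*}_{0,n}$. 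To conclude compactness of ${\cal Q}^{\bf C1}_{0,n}(D)$ it then suffices to show $\overrightarrow{Q}^{*}_{0,n}$ is admissible. The distortion constraint passes to the limit because, by {\bf (A3)} together with the compactness furnished by {\bf (A1)}, the functional $\overrightarrow{Q}_{0,n}\mapsto\frac{1}{n+1}\int d_{0,n}(x^n,y^n)(\mu_{0,n}\otimes\overrightarrow{Q}_{0,n})(dx^n,dy^n)$ is lower semicontinuous, so the sublevel set $\{\,\cdot\leq D\,\}$ is closed; its nonemptiness is exactly what the Slater-type feasibility condition {\bf (A4)} provides. Finally, that $\overrightarrow{Q}^{*}_{0,n}$ still obeys the nonanticipative consistency condition {\bf C1} --- equivalently, that it keeps the causal factorization $\overrightarrow{Q}^{*}_{0,n}(dy^n|x^n)=\otimes_{i=0}^{n}q_i^{*}(dy_i;y^{i-1},x^i)$ with no dependence on future $x$-coordinates --- is the step that uses {\bf (A2)}: joint continuity of the one-step conditional expectations $(x^n,y^{n-1})\mapsto\int h(y)P_{Y|Y^{n-1},X^n}(dy|y^{n-1},x^n)$ for $h\in BC({\cal Y}_n)$ allows the adaptedness of the finite-section kernels to be carried through the weak limit.

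\noi For lower semicontinuity of the objective, I would use the representation $\mathbb{I}_{X^n\rightarrow{Y^n}}(\mu_{0,n},\overrightarrow{Q}_{0,n})=\mathbb{D}(\mu_{0,n}\otimes\overrightarrow{Q}_{0,n}\,||\,\mu_{0,n}\times\nu_{0,n})$ from (\ref{equation33}). By construction of the topology, $\overrightarrow{Q}_{0,n}\mapsto\mu_{0,n}\otimes\overrightarrow{Q}_{0,n}$ is a homeomorphism onto its image, and $\overrightarrow{Q}_{0,n}\mapsto\nu_{0,n}$ is continuous because the ${\cal Y}_{0,n}$-marginal is a continuous image of the joint law; hence $\overrightarrow{Q}_{0,n}\mapsto\mu_{0,n}\times\nu_{0,n}$ is continuous as well. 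Since relative entropy $\mathbb{D}(\cdot\,||\,\cdot)$ is jointly lower semicontinuous for weak convergence on Polish spaces, the composition $\overrightarrow{Q}_{0,n}\mapsto\mathbb{I}_{X^n\rightarrow{Y^n}}(\mu_{0,n},\overrightarrow{Q}_{0,n})$ is lower semicontinuous. Combined with the compactness above, a lower semicontinuous functional on a nonempty compact set attains its infimum, so the extremum in (\ref{ex12}) is achieved by some $\overrightarrow{Q}^{*}_{0,n}\in{\cal Q}^{\bf C1}_{0,n}(D)$.

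\noi The step I expect to be the main obstacle is the closedness of ${\cal Q}^{\bf C1}_{0,n}$ under weak limits: in contrast to the distortion and information functionals, the nonanticipative (causal) factorization is a structural constraint on the kernel rather than a continuous functional of it, so one must argue carefully that it survives the disintegration of the limiting measure --- this is precisely where {\bf (A2)} is indispensable and where most of the technical effort goes. A more routine point, clean here only because ${\cal X}_{0,n}$ and ${\cal Y}_{0,n}$ are Polish, is checking that weak convergence of the joint laws produces a genuine limiting stochastic kernel with the prescribed ${\cal X}_{0,n}$-marginal.
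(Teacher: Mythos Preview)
Your proposal is correct and follows essentially the same route as the paper: the paper states (in the introduction and via the cited references) that existence is established under the topology of weak convergence of probability measures, i.e., by the direct method---tightness/Prokhorov compactness of the induced joint laws from {\bf(A1)}, closedness of the fidelity set from {\bf(A3)}--{\bf(A4)}, preservation of the nonanticipative structure {\bf C1} in the weak limit via {\bf(A2)}, and lower semicontinuity of $\mathbb{D}(\cdot||\cdot)$ for the objective. Your identification of the closedness of ${\cal Q}^{\bf C1}$ under weak limits as the delicate step, and of {\bf(A2)} as the hypothesis that carries it, is exactly the point emphasized in the full derivation cited from \cite{stavrou-kourtellaris-charalambous2014aaa}.
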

\noi It can be easily shown that $R_{0,n}^{na}(D)$ is equivalent to Gorbunov and Pinsker \cite{gorbunov-pinsker} definition of nonanticipatory $\epsilon$-entropy defined via mutual information $I(X^n;Y^n)$ by $R^{\epsilon}_{0,n}(D)=\inf\big{\{}I(X^n;Y^n):~P_{Y^n|X^n}(\cdot|x^n)\in{\cal Q}_{0,n}(D)\cap\{X_{i+1}^n\leftrightarrow{X}^i\leftrightarrow{Y}^i,~i=0,1,\ldots,n-1\}\big{\}}$. An extensive elaboration on the equality is given in \cite[Sec. III]{stavrou-kourtellaris-charalambous2014aaa}. By combining Theorem~\ref{existence_rd} and \cite[Th. 2-4]{gorbunov-pinsker} we have the following important results.
\begin{corollary}\label{corollary} Suppose the conditions of Theorem~\ref{existence_rd} hold. In addition, assume {\bf (A5)} the source is stationary; {\bf (A6)} for any $k=1,2,\ldots$, the sets ${\cal Q}_{0,n}^{\bf C1}(D)$ and ${\cal Q}^{\bf C1}_{k,n+k}(D)$ are copies of the same set.\\
Then $\lim_{n\rightarrow\infty}\frac{1}{n+1}{R}^{na}_{0,n}(D)$ exists and it is finite.\\
If also, {\bf(A7)} $\overrightarrow{Q}_{0,n}(\cdot|x^n)\in{\cal Q}_{0,n}^{\bf C1}(D)$ implies $\overrightarrow{Q}_{0,k}(\cdot|x^k)\in{\cal Q}_{0,k}^{\bf C1}(D)$, $\overrightarrow{Q}_{k+1,n}(\cdot|x_{k+1},\ldots,x_n)\in{\cal Q}_{k+1,n}^{\bf C1}(D)$ $\forall{k}=0,1,\ldots,n-1$;~{\bf(A8)} for any $\alpha_t:[s_1,s_2]\longmapsto[0,\infty)$, $\sum_{t=s_1}^{s_2}\alpha_t=1$, $\forall{t}\in(0,\infty)$, and $P_{Y^n|X^n}(\cdot|x^n)\in{\cal Q}_{0,n}(D)\Longrightarrow{P}_{\tilde{Y}^n|X^n}(\cdot|x^n)\in{\cal Q}_{0,n}(D)$, $\forall{n}\in\mathbb{N}$, where ${P}_{X^n,\tilde{Y}^n}({\cal A})=({P}_{\tilde{Y}^n|X^n}\otimes{P}_{X^n})({\cal A})\tri\sum_{s_1}^{s_2}\alpha_s({P}_{\tilde{Y}^n|X^n}\otimes{P}_{X^n})({\cal A}_s)$, ${\cal A}\tri\{(X_i,Y_i)=(x_i,y_i):~i=0,1,\ldots\}\subseteq{\cal X}^{\mathbb{N}}\times{\cal Y}^{\mathbb{N}}$, ${\cal A}_s\tri\{(X_{i-s},Y_{i-s})=(x_i,y_i):~i=0,1,\ldots\}$.\\
Then the infimum in (\ref{ex12}) is achieved by $\overrightarrow{Q}^*_{0,n}(dy^n|x^n)\in{\cal Q}^{\bf C1}_{0,n}(D)$ and $\{(X_n,Y_n):~n\in\mathbb{N}\}$ is jointly stationary.
\end{corollary}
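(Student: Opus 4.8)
\emph{Proof proposal.} The plan is to reduce the statement to the corresponding results of Gorbunov and Pinsker \cite[Th.~2--4]{gorbunov-pinsker}, using the identification $R^{na}_{0,n}(D)=R^{\epsilon}_{0,n}(D)$ recorded above (with the detailed justification in \cite[Sec.~III]{stavrou-kourtellaris-charalambous2014aaa}), and then to check that (A5)--(A8) are precisely the hypotheses that make that reduction legitimate in the present abstract setting. Concretely, two ingredients are needed: a subadditivity property of $\{R^{na}_{0,n}(D)\}_{n\in\mathbb{N}}$ for the first assertion, and a Ces\`aro time-averaging argument for the second.

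\emph{Existence and finiteness of the limit.} First I would establish that the sequence $a_{n+1}\tri R^{na}_{0,n}(D)$ is subadditive, i.e.\ $R^{na}_{0,n+k+1}(D)\le R^{na}_{0,n}(D)+R^{na}_{0,k}(D)$. Using the chain rule for directed information one splits $\mathbb{I}_{X^{n+k+1}\rightarrow Y^{n+k+1}}(\mu_{0,n+k+1},\overrightarrow Q_{0,n+k+1})$ across the two index blocks $\{0,\dots,n\}$ and $\{n+1,\dots,n+k+1\}$; stationarity of the source (A5) identifies the law of the second block with a time shift of an initial segment of length $k+1$, and (A6) guarantees that the fidelity sets ${\cal Q}^{\bf C1}_{0,n}(D)$ and ${\cal Q}^{\bf C1}_{k+1,n+k+1}(D)$ are literal copies. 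Hence, given $D$-admissible causal kernels on each block, their concatenation is again ${\bf C1}$-consistent, lies in ${\cal Q}^{\bf C1}_{0,n+k+1}(D)$, and its directed information equals the sum of the two block quantities, which yields the claimed inequality after taking infima. Fekete's subadditivity lemma then gives that $\frac{1}{n+1}R^{na}_{0,n}(D)=a_{n+1}/(n+1)$ converges to $\inf_{n}\frac{1}{n+1}R^{na}_{0,n}(D)$; each $a_{n+1}$ is finite by Theorem~\ref{existence_rd} under (A1)--(A4), and since directed information is nonnegative the limit is finite.

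\emph{Stationarity of the optimizer.} For the second assertion I would run the standard time-averaging construction. For each $n$, Theorem~\ref{existence_rd} supplies an optimal $\overrightarrow Q^{*}_{0,n}(\cdot|x^n)\in{\cal Q}^{\bf C1}_{0,n}(D)$; condition (A7) lets one restrict it consistently to every sub-block, and (A8) allows one to form the convex combination (with weights $\alpha_s$) of its time shifts while remaining inside the fidelity set. Because $\overrightarrow Q\mapsto\mathbb{I}_{X^n\rightarrow Y^n}(\mu_{0,n},\overrightarrow Q)$ is convex, this averaged kernel does not increase the rate. Passing $n\to\infty$ and extracting a weak limit—available from the compactness hypothesis (A1) together with the weak-compactness and lower-semicontinuity arguments already used in the proof of Theorem~\ref{existence_rd}—produces a shift-invariant reproduction kernel that attains the limiting rate; by the first part this limit is $R^{na}(D)$, so the infimum in (\ref{ex12}) is achieved and the induced process $\{(X_n,Y_n):n\in\mathbb{N}\}$ is jointly stationary.

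\emph{Main obstacle.} The delicate point is the block-concatenation step underlying subadditivity: one must verify that gluing two causal (${\bf C1}$-consistent) kernels yields an object that is again ${\bf C1}$-consistent and that its directed information decomposes with no cross term, which is exactly where the nonanticipative Markov structure $X^n_{i+1}\leftrightarrow X^i\leftrightarrow Y^i$ and the precise form of the constraint sets in (A6)--(A7) are used; this is the place where the abstract-space formulation requires more care than the finite-alphabet treatment of \cite{gorbunov-pinsker}. The weak-limit/stationarity step is also technically involved (uniform tightness of the averaged kernels plus joint lower semicontinuity of $\mathbb{I}_{X^n\rightarrow Y^n}$), but it parallels the proof of Theorem~\ref{existence_rd} closely enough to go through under the stated hypotheses.
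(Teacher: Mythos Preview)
Your proposal is correct and follows essentially the same route as the paper: the paper's proof consists of exactly the two steps you identify---use Theorem~\ref{existence_rd} to get finiteness of each $R^{na}_{0,n}(D)$, then invoke \cite[Th.~2--4]{gorbunov-pinsker} via the identification $R^{na}_{0,n}(D)=R^{\epsilon}_{0,n}(D)$. Your sketch of the subadditivity/Fekete argument and the time-averaging construction is a faithful unpacking of what those cited theorems contain, so you have supplied more detail than the paper itself, but no different idea.
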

\begin{proof}
By Theorem~\ref{existence_rd}, $R_{0,n}^{na}(D)$ is finite for any finite $n$. Using this and \cite[Theorem 2-4]{gorbunov-pinsker}, the results follow.
\end{proof}
\noi Utilizing the convexity of the extremum problem (\ref{ex12}) (see \cite[Th. II.2]{stavrou-kourtellaris-charalambous2014aaa}), and applying variational methods, the general closed form expression of the optimal stationary reproduction conditional distribution of (\ref{ex12}) is derived in \cite[Sec. IV]{charalambous-stavrou-ahmed2014a}. Here, we only state the main theorem. 
\begin{theorem}\cite{charalambous-stavrou-ahmed2014a}({Optimal stationary reproduction distribution})\label{th6}
We suppose the optimal reproduction distribution and source distribution are stationary, i.e., conditions of Corollary~\ref{corollary} hold. The optimal solution of information NRDF is given by\footnote{Due to stationarity assumption $\nu^*_i(\cdot;\cdot)=\nu^*(\cdot;\cdot)$ and ${q}^*_{i}(\cdot;\cdot,\cdot)={q}^*(\cdot;\cdot,\cdot)$.}
\begin{align}
&\overrightarrow{Q}^*_{0,n}(dy^n|x^n)=\otimes_{i=0}^n{q}^*_{i}(dy_i;y^{i-1},x^i)\nonumber\\
&=\otimes_{i=0}^n\frac{e^{s \rho(x_i,y_i)}\nu^*_{i}(dy_i;y^{i-1})}{\int_{{\cal Y}_i} e^{s \rho(x_i,y_i)} \nu^*_{i}(dy_i;y^{i-1})},~s\leq{0}\label{ex14}
\end{align}
and $\nu^*_{i}(\cdot;y^{i-1})\in {\cal Q}({\cal Y}_i;{\cal Y}_{0,{i-1}})$. The information NRDF is given by
\begin{align}
&{R}^{na}_{0,n}(D)=sD(n+1)-\sum_{i=0}^n\int\log \Big( \int e^{s\rho(x_i,y_i)}\nu^*_{i}(dy_i;y^{i-1})\Big)\nonumber\\
&\quad\times{\overrightarrow{Q}^*_{0,i-1}(dy^{i-1}|x^{i-1})\otimes{\mu}_{0,i}(dx^i).}\nonumber
\end{align}
Moreover, if ${R}^{na}_{0,n}(D) > 0$ then $ s < 0$, and
$$\frac{1}{n+1}\sum_{i=0}^n\int\rho(x_i,y_i)\overrightarrow{Q}^*_{0,i}(dy^i|x^i)\otimes{\mu}_{0,i}(dx^i)=D.$$
\end{theorem}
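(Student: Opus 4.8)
\noindent The plan is to treat~(\ref{ex12}) as a convex optimization over the causally factored reproduction kernels $\{q_i(dy_i;y^{i-1},x^i)\}_{i=0}^{n}$ and to characterize its minimizer by Lagrange duality together with a variational (Gateaux-derivative) argument, the detailed form of which is the technical heart of \cite{charalambous-stavrou-ahmed2014a}. First I would form the Lagrangian ${\cal L}(\overrightarrow{Q}_{0,n},s)\tri\mathbb{I}_{X^n\rightarrow{Y^n}}(\mu_{0,n},\overrightarrow{Q}_{0,n})-s\bigl(\frac{1}{n+1}\int{d}_{0,n}(x^n,y^n)(\mu_{0,n}\otimes\overrightarrow{Q}_{0,n})(dx^n,dy^n)-D\bigr)$ with the multiplier $s\leq0$ attached to the fidelity constraint. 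Since the problem is convex in $\overrightarrow{Q}_{0,n}$ \cite[Th.~II.2]{stavrou-kourtellaris-charalambous2014aaa}, the first-order (KKT) condition is both necessary and sufficient for a global minimizer, so it suffices to exhibit a stationary point. A key simplification, exactly as in the classical RDF, is that when the Gateaux derivative of $\mathbb{I}_{X^n\rightarrow{Y^n}}=\mathbb{D}(\mu_{0,n}\otimes\overrightarrow{Q}_{0,n}\,\|\,\mu_{0,n}\times\nu_{0,n})$ is computed, the contribution coming from the induced $Y^n$-marginal $\nu_{0,n}$ varying with $\overrightarrow{Q}_{0,n}$ integrates to zero, so the derivative reduces to $\int\bigl(\log\frac{d(\mu_{0,n}\otimes\overrightarrow{Q}^{*}_{0,n})}{d(\mu_{0,n}\times\nu^{*}_{0,n})}-s\,d_{0,n}\bigr)(d\overline{P}-dP^{*})$ for competitors $\overline{P}=\mu_{0,n}\otimes\overline{Q}_{0,n}$, $\overline{Q}_{0,n}\in{\cal Q}^{\bf C1}_{0,n}(D)$.

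Next I would exploit the additive structure: $d_{0,n}=\sum_{i=0}^{n}\rho(x_i,y_i)$ and $\log\frac{d(\mu_{0,n}\otimes\overrightarrow{Q}^{*}_{0,n})}{d(\mu_{0,n}\times\nu^{*}_{0,n})}=\sum_{i=0}^{n}\log\frac{q_i^{*}(dy_i;y^{i-1},x^i)}{\nu_i^{*}(dy_i;y^{i-1})}$, where $\nu_i^{*}(\cdot;y^{i-1})$ is the disintegration of $\nu^{*}_{0,n}$. Perturbing the $i$-th kernel $q_i$ while holding the others at their optimal values and tracking the downstream contributions, the stationarity condition localizes, for $(\overrightarrow{Q}^{*}_{0,i-1}\otimes\mu_{0,i})$-a.e.\ $(x^i,y^{i-1})$, to the pointwise relation $\log\frac{q_i^{*}(dy_i;y^{i-1},x^i)}{\nu_i^{*}(dy_i;y^{i-1})}=s\,\rho(x_i,y_i)+\lambda_i(y^{i-1},x^i)$ with $\lambda_i$ independent of $y_i$; exponentiating and imposing $\int_{{\cal Y}_i}q_i^{*}(dy_i;y^{i-1},x^i)=1$ forces $e^{-\lambda_i(y^{i-1},x^i)}=\int_{{\cal Y}_i}e^{s\rho(x_i,y_i)}\nu_i^{*}(dy_i;y^{i-1})$, which is exactly~(\ref{ex14}). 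Under the stationarity hypotheses of Corollary~\ref{corollary} the index $i$ then drops from $q^{*}$ and $\nu^{*}$, and consistency requires $\nu^{*}$ to be the marginal induced by~(\ref{ex14}) --- the coupled fixed point that the Blahut--Arimoto recursion of \cite{stavrou-charalambous2013isit} solves iteratively.

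For the value formula I would substitute~(\ref{ex14}) back into ${R}^{na}_{0,n}(D)=\sum_{i=0}^{n}\int\mathbb{D}\bigl(q_i^{*}(\cdot;y^{i-1},x^i)\,\|\,\nu_i^{*}(\cdot;y^{i-1})\bigr)\overrightarrow{Q}^{*}_{0,i-1}(dy^{i-1}|x^{i-1})\otimes\mu_{0,i}(dx^i)$: at stage $i$, $\mathbb{D}(q_i^{*}\|\nu_i^{*})=\int\bigl(s\,\rho(x_i,y_i)-\log\int_{{\cal Y}_i}e^{s\rho(x_i,y_i)}\nu_i^{*}(dy_i;y^{i-1})\bigr)q_i^{*}(dy_i;y^{i-1},x^i)$, and since~(\ref{ex14}) gives the factorization $\overrightarrow{Q}^{*}_{0,i-1}\otimes{q}_i^{*}=\overrightarrow{Q}^{*}_{0,i}$, summing over $i$ reproduces verbatim the logarithmic sum in the claimed expression while the $s\rho$-terms collect into $s\sum_{i=0}^{n}\int\rho(x_i,y_i)\overrightarrow{Q}^{*}_{0,i}(dy^i|x^i)\otimes\mu_{0,i}(dx^i)$. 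Finally, for the sign and slackness claims I would invoke the KKT conditions: if $s=0$ then~(\ref{ex14}) collapses to $q^{*}=\nu^{*}$ and hence ${R}^{na}_{0,n}(D)=0$, so ${R}^{na}_{0,n}(D)>0$ forces $s<0$; with $s<0$, complementary slackness makes the distortion constraint active, $\frac{1}{n+1}\sum_{i=0}^{n}\int\rho(x_i,y_i)\overrightarrow{Q}^{*}_{0,i}(dy^i|x^i)\otimes\mu_{0,i}(dx^i)=D$, which converts the $s\rho$-sum into $sD(n+1)$ and completes the formula for ${R}^{na}_{0,n}(D)$.

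The step I expect to be the main obstacle is making this variational argument rigorous on abstract Polish spaces while honestly accounting for the nonanticipative structure: perturbing a single conditional kernel $q_i$ changes the downstream $Y$-marginals $\nu_j^{*}$ and stage measures for $j>i$, and one must show --- as is done carefully in \cite{charalambous-stavrou-ahmed2014a} --- that these propagated contributions do not spoil the clean per-stage exponential form, in addition to justifying differentiation under the integral sign, the a.s.\ finiteness of the partition functions $\int_{{\cal Y}_i}e^{s\rho(x_i,y_i)}\nu_i^{*}(dy_i;y^{i-1})$ and of the Radon--Nikodym derivatives $dq_i^{*}/d\nu_i^{*}$, and the fact that the resulting $q_i^{*}$ lies in ${\cal Q}^{\bf C1}_{0,n}(D)$. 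Here I would lean on compactness of ${\cal Y}_{0,n}$ from~{\bf(A1)}, continuity and boundedness of $\rho$ implied by~{\bf(A2)}--{\bf(A3)}, and the existence of the minimizer already guaranteed by Theorem~\ref{existence_rd}.
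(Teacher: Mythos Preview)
Your proposal is correct and follows essentially the same route the paper indicates: the paper does not give a self-contained proof of Theorem~\ref{th6} but states that it is obtained in \cite[Sec.~IV]{charalambous-stavrou-ahmed2014a} by exploiting the convexity of the extremum problem~(\ref{ex12}) and applying variational methods, which is precisely the Lagrangian/Gateaux-derivative argument you outline, including the per-stage localization leading to the exponential-tilting form~(\ref{ex14}) and the standard KKT reasoning for $s<0$ and the active distortion constraint. Your identification of the downstream-propagation issue as the main technical hurdle is also accurate and is exactly what the cited reference handles in detail.
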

\begin{remark}\label{markov_stationary} Note that for single letter distortion function $\rho(x_i,y_i)$ the optimal reproduction is Markov with respect to $x_i$ given by $q_i^*(dy_i;y^{i-1},x_i)$. If the distortion function is generalized to $\rho(x_i,T^i{y^n})$, where $T^i{y^n}$ is the shift operator on $y^n$, then
$\overrightarrow{Q}_{0,n}^*(dy^n|x^n)$ is given by (\ref{ex14}) with $\rho(x_i,y_i)$ replaced by $\rho(x_i,T^i{y^n})$, and similarly for $\rho(T^ix^n,y_i)$. 
\end{remark}
\noi Next, we present an alternative equivalent characterization of the solution of $R_{0,n}^{na}(D)$, which can be used to derive a lower bound on $R_{0,n}^{na}(D)$ similar to the SLB \cite[Ch. 4]{berger}. 
\begin{theorem}(Alternative characterization)\label{alternative_expression} Suppose the conditions of Theorem~\ref{th6} hold. Then 
\begin{align*}
R^{na}_{0,n}(D)&=\max_{s\leq{0}}\max_{\lambda\in\Psi_s}\big\{sD(n+1)+\sum_{i=0}^n\int\log\big(\lambda_i(x^i,y^{i-1})\big)\nonumber\\
&\times{P}_{0,i-1}(dx^{i-1},dy^{i-1})\otimes{p}_{i}(dx_i;x^{i-1})\big\}
\end{align*}
where
$\Psi_s\tri\big\{\lambda\tri\{\lambda_i(x^i,y^{i-1})\geq{0}:~i=0,1,\ldots,n\}:\int{e}^{s\rho(x_i,y_i)}\lambda_i(x^i,y^{i-1})P_{0,i}(dx^{i}|y^{i-1})\leq{1},~i=0,1,\ldots,n\big\}$.
\end{theorem}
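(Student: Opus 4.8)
\noi\textbf{Proof plan.} The plan is to establish the two inequalities ``$\geq$'' and ``$\leq$'' separately, executing \emph{stage by stage} in the time index $i$ the Lagrangian/Fenchel duality argument underlying the classical single-letter characterization \cite[Th. 6.3.9]{blahut1987}. Throughout I would keep the optimal objects $\overrightarrow{Q}^*_{0,n}$, $\nu^*_i$, $q^*_i$ and the multiplier $s^*\leq 0$ furnished by Theorem~\ref{th6}, set $P_{0,n}\tri\mu_{0,n}\otimes\overrightarrow{Q}^*_{0,n}$, and use the stagewise decomposition $\mathbb{I}_{X^n\rightarrow Y^n}(\mu_{0,n},\overrightarrow{Q}^*_{0,n})=\sum_{i=0}^n\int\log\big(q^*_i(dy_i;y^{i-1},x^i)/\nu^*_i(dy_i;y^{i-1})\big)P_{0,i}(dx^i,dy^i)$ together with the nonanticipative Markov property $X_{i+1}^n\leftrightarrow X^i\leftrightarrow Y^i$.

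\noi First I would show that every feasible pair lower-bounds $R^{na}_{0,n}(D)$. Fix $s\leq 0$, $\lambda\in\Psi_s$, and in the $i$-th stage integral with $y^{i-1}$ held fixed and $P_i(dx^i)\tri P_{0,i}(dx^i|y^{i-1})$, split $\log\frac{q^*_i}{\nu^*_i}=\log\frac{q^*_i}{\lambda_i e^{s\rho(x_i,y_i)}\nu^*_i}+s\rho(x_i,y_i)+\log\lambda_i(x^i,y^{i-1})$. Integrating against $q^*_i(dy_i;y^{i-1},x^i)P_i(dx^i)$ and using $\int q^*_i=1$, the first summand equals $\mathbb{D}\big(P_i\otimes q^*_i\,\big\|\,M_i\big)$ with $M_i(dx^i,dy_i)\tri\lambda_i(x^i,y^{i-1})e^{s\rho(x_i,y_i)}P_i(dx^i)\nu^*_i(dy_i;y^{i-1})$, whose total mass is $\int\big(\int\lambda_i e^{s\rho}P_i(dx^i)\big)\nu^*_i(dy_i;y^{i-1})\leq 1$ by the very constraint defining $\Psi_s$; hence this term is $\geq 0$. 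Averaging over $y^{i-1}$, summing over $i$, and using $s\leq 0$ with $\sum_i\int\rho\,dP_{0,i}\leq (n+1)D$ (as $\overrightarrow{Q}^*_{0,n}\in{\cal Q}^{\bf C1}_{0,n}(D)$) then yields $R^{na}_{0,n}(D)=\mathbb{I}_{X^n\rightarrow Y^n}(\mu_{0,n},\overrightarrow{Q}^*_{0,n})\geq s(n+1)D+\sum_{i=0}^n\int\log\lambda_i\,P_{0,i-1}(dx^{i-1},dy^{i-1})\otimes p_i(dx_i;x^{i-1})$, so $R^{na}_{0,n}(D)$ dominates the asserted supremum.

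\noi Next I would show this bound is attained, at $s=s^*$ and $\lambda^*_i(x^i,y^{i-1})\tri\big(\int_{{\cal Y}_i}e^{s^*\rho(x_i,y_i)}\nu^*_i(dy_i;y^{i-1})\big)^{-1}$. Feasibility $\lambda^*\in\Psi_{s^*}$ is exactly the consistency relation $\int q^*_i(dy_i;y^{i-1},x^i)P_{0,i}(dx^i|y^{i-1})=\nu^*_i(dy_i;y^{i-1})$: substituting the closed form (\ref{ex14}) of $q^*_i$ gives $\int e^{s^*\rho(x_i,y_i)}\lambda^*_i(x^i,y^{i-1})P_{0,i}(dx^i|y^{i-1})=1$ ($\nu^*_i$-a.e.\ in $y_i$), which is the $\Psi_{s^*}$ constraint. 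Plugging $\log\lambda^*_i=-\log\big(\int e^{s^*\rho}\nu^*_i\big)$ into the objective and reordering the integrating measure as $P_{0,i-1}\otimes p_i=\mu_{0,i}\otimes\overrightarrow{Q}^*_{0,i-1}$ (legitimate since $\overrightarrow{Q}^*_{0,i-1}$ and $p_i$ depend only on $y^{i-1},x^{i-1}$ and on $x^{i-1}$ respectively, by nonanticipativity), the objective at $(s^*,\lambda^*)$ reduces precisely to the expression for $R^{na}_{0,n}(D)$ in Theorem~\ref{th6}. Hence the supremum is attained and equals $R^{na}_{0,n}(D)$, so both ``$\sup$''s are ``$\max$''s. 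The degenerate case $R^{na}_{0,n}(D)=0$ (where $s^*$ may be $0$) I would dispatch separately: $s=0$, $\lambda_i\equiv1$ is feasible with objective $0$, while Jensen's inequality inside the first step shows no feasible pair can exceed $0$.

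\noi The hard part will be the first step: recognizing that the correct sub-normalized reference measure is $M_i$ on ${\cal X}_{0,i}\times{\cal Y}_i$ --- built from $\lambda_i e^{s\rho}$ as a \emph{density} against $P_{0,i}(\cdot|y^{i-1})\otimes\nu^*_i(\cdot;y^{i-1})$, not the more tempting $\lambda_i e^{s\rho}\nu^*_i$ viewed as a measure in $y_i$ alone --- since only this pairing lets the $\Psi_s$ constraint (which bounds an integral in $x^i$) control the total mass, after which nonnegativity of relative entropy against a mass-$\leq 1$ measure closes the argument. The remaining items (measurability of $\lambda^*_i$, the $\nu^*_i$-a.e.\ nature of the feasibility constraint off $\mathrm{supp}\,\nu^*_i$, the interchange of the stagewise suprema with the finite sum over $i$, and attainment of $\max_{s\leq 0}$, which is inherited from Theorem~\ref{th6}) are routine given Theorem~\ref{th6} and the nonanticipative structure.
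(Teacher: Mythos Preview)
Your proposal is correct and follows precisely the expected route: the stagewise Lagrangian/relative-entropy duality that generalizes \cite[Th.~6.3.9]{blahut1987} to the nonanticipative setting, which the paper itself flags as the intended analogy. The paper does not actually give its proof in the text---it only points to \cite[App.~E]{stavrou-kourtellaris-charalambous2014aaa}---so there is nothing in the present document to compare against beyond that remark; your two-inequality argument (sub-probability reference measure $M_i$ for ``$\geq$'', the choice $\lambda_i^*=(\int e^{s^*\rho}\nu_i^*)^{-1}$ for ``$\leq$'') is exactly what that reference would be expected to contain, and your identification of the measure $P_{0,i-1}\otimes p_i=\mu_{0,i}\otimes\overrightarrow{Q}^*_{0,i-1}$ via the nonanticipativity Markov chain is the one nontrivial structural point, which you handle correctly.
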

\begin{proof}
The derivation is found in \cite[App. E]{stavrou-kourtellaris-charalambous2014aaa}.
\end{proof}

\section{Applications via Examples}
\par In this section, we describe some applications of information NRDF using the following two working examples: {\bf(i)}  the BSMS($p$), {\bf(ii)} the multidimensional Gaussian stationary source.
\vspace*{0.2cm}\\
\noi{\bf Bound and RL due to Causal Codes.} Let $R(D)$ denotes the OPTA by noncausal codes \cite{berger}, and $r^c(D)$ the OPTA by causal codes \cite{neuhoff1982}. Then we have the following bounds.
\begin{align}
R(D)\stackrel{(b)}\leq{R}^{na}(D)\stackrel{(c)}\leq{r}^c(D)\label{equationX1}
\end{align}
where $(b)$ follows from the fact that $R(D)$ is optimized over a larger set than that of $R^{na}(D)$, and $(c)$ follows by the converse coding theorem and \cite{neuhoff1982}. Since the OPTA by noncausal codes for sources with memory is often unknown (unless one consider memoryless or Gaussian sources), then $R^{na}(D)$ can be used to find an upper bound to the OPTA by noncausal codes. For memoryless sources $R(D)=R^{na}(D)$, and this bound is tight. Moreover, since $r^c(D)-R(D)\geq{R}^{na}(D)-R(D)$, we can find the RL of causal codes with respect to the noncausal codes using $R^{na}(D)$.\\ 
\noi{\bf Noisy Coding Theorem (Source-Channel Matching).} An operational definition for $R^{na}(D)$ can be established by using symbol-by-symbol transmission, provided for a given source and distortion function we can find the optimal reproduction distribution, and then realize it over an encoder-channel-decoder, so that the source is matched to the channel. We give an example for multidimensional Gaussian stationary sources providing a noisy coding theorem for $R^{na}(D)$.

\subsection{BSMS(p): Exact Solution, Bounds, and Rate Loss}\label{example:bsms}

\par Consider a BSMS($p$), with stationary transition probabilities
$\big\{P_{X_i|X_{i-1}}(x_i|x_{i-1}):~(x_i,x_{i-1})\in\{0,1\}\times\{0,1\}\big\}$ given by $P_{X_i|X_{i-1}}(0|0)=P_{X_i|X_{i-1}}(1|1)=1-p$, $P_{X_i|X_{i-1}}(1|0)=P_{X_i|X_{i-1}}(0|1)=p$, $i\in 0,1,\ldots$, and  single letter Hamming distortion criterion, $\rho(x,y)=0$ if $x=y$ and $\rho(x,y)=1$ if $x \neq y$. The solution to the NRDF is given to the next theorem. 
\begin{theorem}\label{marex1} For a BSMS($p$) and single letter Hamming distortion
\begin{eqnarray}
{ R}^{na}(D) = \left\{ \begin{array}{ll}
         H(m)-H(D) & \mbox{if $D \leq \frac{1}{2}$} \\
        0 & \mbox{otherwise}\end{array} \right. \label{equationA}
\end{eqnarray}
where $m=1-p-D+2pD$, and the optimal (stationary) reproduction distribution is 
\begin{align}
P_{Y_i|X_i,Y_{i-1}}^*(y_i|x_i,y_{i-1})=\bbordermatrix{~ &  &  &    \cr
                   & \alpha & \beta& 1-\beta & 1-\alpha\vspace{0.3cm} \cr
                   & 1-\alpha & 1-\beta& \beta &  \alpha \cr}\nonumber
\end{align}
where $\alpha=\frac{(1-p)(1-D)}{1-p-D+2pD},~\beta=\frac{p(1-D)}{p+D-2pD}$.
\end{theorem}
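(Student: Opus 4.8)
The plan is to combine the closed-form characterization of the optimal reproduction in Theorem~\ref{th6} with the $0\leftrightarrow 1$ symmetry of the source, close the resulting self-consistency equations by a filtering argument, and then read off the rate. The BSMS($p$) is stationary on a finite, hence compact, alphabet and the Hamming distortion is trivially continuous, so the hypotheses of Corollary~\ref{corollary} hold; the infimum in (\ref{ex12}) is attained by a stationary $\overrightarrow{Q}^*$, and by Theorem~\ref{th6} together with Remark~\ref{markov_stationary} it factors as $q^*_i(dy_i;y^{i-1},x_i)\propto e^{s\rho(x_i,y_i)}\nu^*_i(dy_i;y^{i-1})$ with $s\le 0$. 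Because the source law is invariant under the bit flip $x\mapsto 1-x$, so is the optimizer, so I will search for a solution in which the kernel $\nu^*_i(dy_i;y^{i-1})$ depends on the past only through $y_{i-1}$ and is binary symmetric, $\nu^*(0;0)=\nu^*(1;1)=1-q$ and $\nu^*(1;0)=\nu^*(0;1)=q$, and write $\gamma\tri e^s\in(0,1]$.

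With this ansatz, (\ref{ex14}) forces the reproduction kernel $q^*(y_i;y_{i-1},x_i)$ to be the $2\times 4$ table of the statement with $\alpha=q^*(0;0,0)=\tfrac{1-q}{1-q+\gamma q}$ and $\beta=q^*(0;1,0)=\tfrac{q}{q+\gamma(1-q)}$, the remaining entries following by symmetry. Two conditions must then be imposed: the distortion constraint $\tfrac{1}{n+1}\sum_i\mathbb{E}[\rho(X_i,Y_i)]=D$, which Theorem~\ref{th6} makes active whenever $R^{na}_{0,n}(D)>0$, and self-consistency, namely that $\nu^*$ equals the $Y$-transition induced by $\mu_{0,n}\otimes\overrightarrow{Q}^*$. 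The key step toward self-consistency is to show that the Bayesian filter $P_{X_i\mid Y^i}$ is itself a fixed, binary-symmetric function of $y_i$ only; I will do this by induction on the recursion $g_i(x_i\mid y^i)\propto q^*(y_i;y_{i-1},x_i)\sum_{x_{i-1}}P_{X_i\mid X_{i-1}}(x_i\mid x_{i-1})\,g_{i-1}(x_{i-1}\mid y^{i-1})$, verifying that ``$g_{i-1}$ binary symmetric with crossover $D$'' is reproduced as ``$g_i$ binary symmetric with crossover $D$'', i.e. that crossover $D$ is the fixed point of this map. Granting this, the one-step predictor $P_{X_i\mid Y^{i-1}}$ depends on $y_{i-1}$ alone with $P(X_i=y_{i-1}\mid Y_{i-1}=y_{i-1})=m$, $m=1-p-D+2pD$, hence so does $\nu^*$; requiring $\nu^*$ to equal the induced $Y$-transition then gives $q=p$ and $\gamma=D/(1-D)$, i.e. $\alpha=\tfrac{(1-p)(1-D)}{m}$ and $\beta=\tfrac{p(1-D)}{1-m}$, and the one-line computation $P(X_i\ne Y_i)=m(1-\alpha)+(1-m)(1-\beta)=pD+(1-p)D=D$ shows the constructed $\overrightarrow{Q}^*$ is feasible. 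I expect this filter closure to be the main obstacle: it is nonlinear, it is special to a first-order Markov source, and it is precisely what forces the optimal reproduction to depend on only one past reproduction symbol, as the statement asserts.

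Finally I will certify optimality and evaluate the rate. The program (\ref{ex12}) is convex and the $\overrightarrow{Q}^*$ just constructed meets the characterization of Theorem~\ref{th6} with the distortion constraint active, hence it is the minimizer; alternatively, the matching converse $R^{na}_{0,n}(D)\ge (n+1)\big(H(m)-H(D)\big)$ follows by evaluating the lower bound of Theorem~\ref{alternative_expression} at $s=\log\tfrac{D}{1-D}$ and at $\lambda_i$ equal to the reciprocal of the normalizing constant in (\ref{ex14}), namely $\tfrac{1-D}{m}$ when $x_i=y_{i-1}$ and $\tfrac{1-D}{1-m}$ otherwise, which lies in $\Psi_s$ with its constraint tight. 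Since $Y$ is Markov, $H(Y_i\mid Y^{i-1})=H(Y_i\mid Y_{i-1})=H(p)$, and since $q^*$ depends only on $(x_i,y_{i-1})$, $H(Y_i\mid Y^{i-1},X^i)=H(Y_i\mid Y_{i-1},X_i)=mH(\alpha)+(1-m)H(\beta)$ by $P(Y_{i-1})=\tfrac12$ and $P(X_i=y_{i-1}\mid Y_{i-1}=y_{i-1})=m$; hence
\begin{align*}
R^{na}(D)=\lim_{n\to\infty}\frac{1}{n+1}\sum_{i=0}^n I(X^i;Y_i\mid Y^{i-1})=H(p)-mH(\alpha)-(1-m)H(\beta).
\end{align*}
Expanding the binary entropies and collecting the coefficients of $\log p$, $\log(1-p)$, $\log D$, $\log(1-D)$, $\log m$, $\log(1-m)$ gives $mH(\alpha)+(1-m)H(\beta)=H(p)+H(D)-H(m)$, so $R^{na}(D)=H(m)-H(D)$ for $D\le\tfrac12$ (and $s=\log\tfrac{D}{1-D}\le 0$, as Theorem~\ref{th6} requires). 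For $D\ge\tfrac12$ an independent fair-coin reproduction meets distortion $\tfrac12\le D$ at zero directed information, so $R^{na}(D)=0$; the two branches agree at $D=\tfrac12$, completing the argument.
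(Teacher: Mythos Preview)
Your proposal is correct and follows precisely the route the paper sets up: the paper defers the proof of Theorem~\ref{marex1} to \cite[Th.~IV.11]{stavrou-kourtellaris-charalambous2014aaa}, and the natural derivation there (and the one you outline) is to specialize the implicit characterization of Theorem~\ref{th6} to the BSMS($p$), close the self-consistency $\nu^*=P_{Y_i|Y_{i-1}}$ via the filter fixed point $P(X_i\neq Y_i\mid Y^i)=D$, and then evaluate the per-letter rate. Your verification that $q=p$, $e^s=D/(1-D)$ satisfies the fixed-point equations, your entropy identity $mH(\alpha)+(1-m)H(\beta)=H(p)+H(D)-H(m)$, and your dual certificate via Theorem~\ref{alternative_expression} with $\lambda_i=(1-D)/m$ or $(1-D)/(1-m)$ are all correct.

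One expository remark: the step ``$\nu^*$ depends on the past only through $y_{i-1}$'' is not a consequence of the bit-flip symmetry alone; symmetry gives you the binary-symmetric structure, while the Markov reduction is what your filter argument actually establishes (because $P_{X_i\mid Y^{i-1}}$ collapses to $P_{X_i\mid Y_{i-1}}$ at the fixed point). Since you ultimately verify the full KKT/Theorem~\ref{th6} conditions for the candidate and invoke convexity, this is harmless---it is a guess-and-check, not a deduction---but it is worth stating it that way. Similarly, the ``induction'' you mention is really the verification that $D$ is a stationary point of the filter recursion, with joint stationarity of $(X,Y)$ supplied by Corollary~\ref{corollary}; no base case at $i=0$ with the invariant marginal is needed.
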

\begin{proof}
The proof is found in \cite[Th.~IV.11]{stavrou-kourtellaris-charalambous2014aaa}. 
\end{proof}
\noi Note that for $p=\frac{1}{2}$, then BSMS($\frac{1}{2}$) is the IID Bernoulli source, and $R^{na}(D)=1-H(D)\equiv{R}(D)$, $D<\frac{1}{2}$, as expected.
\begin{figure}[ht]
\centering
\includegraphics[scale=0.5]{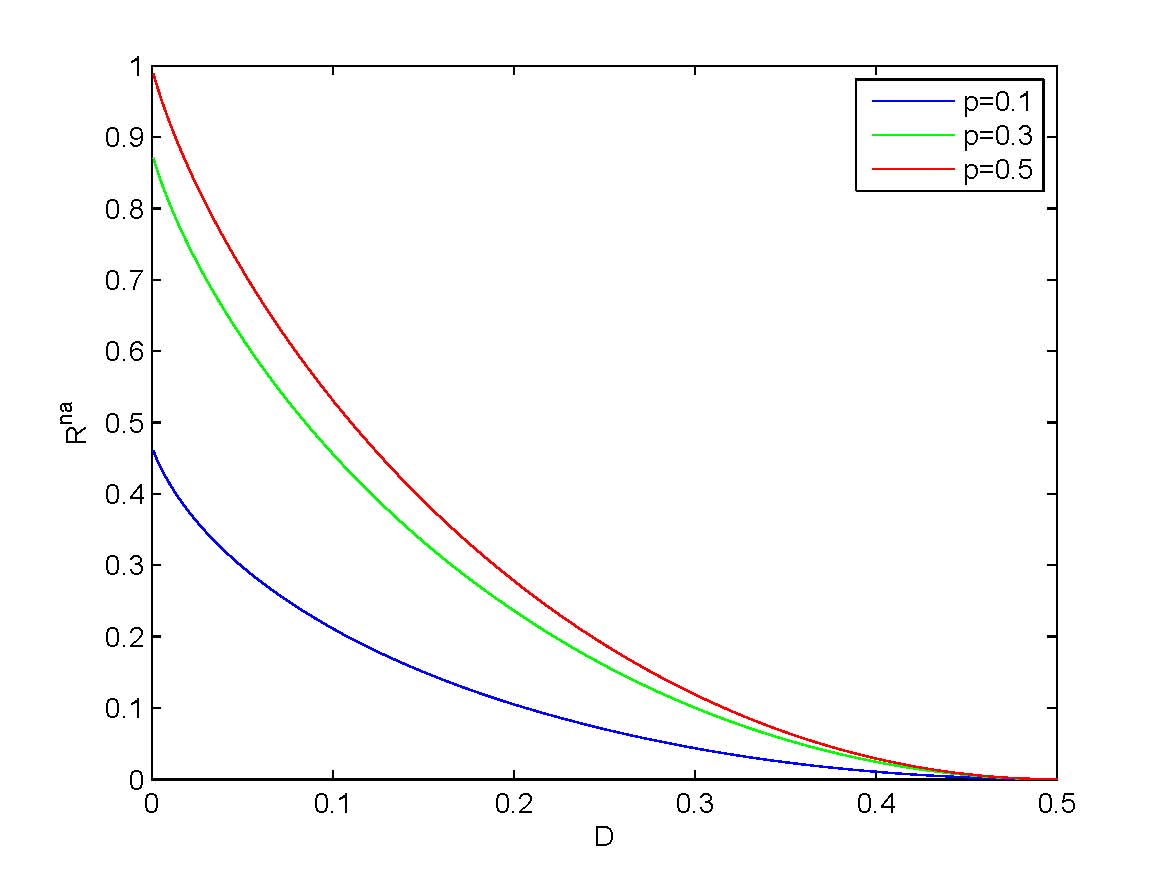}
\caption{$R^{na}(D)$ for different values of parameter $p$.}
\label{nardf-graph}
\end{figure}
\noi The graph of $R^{na}(D)$ is illustrated in Fig.~\ref{nardf-graph}.\\
\noi{\bf Bounds on $\bf{R(D)}$.} The classical RDF for the BSMS($p$) is only known for the distortion region $0\leq{D}\leq{D}_c$ \cite{gray1971}, while for the rest distortion region only bounds are known \cite{berger1977}. Fig.~\ref{figmakrov} shows the graph of $R(D)$ for $0\leq{D}\leq{D}_c$, Berger's lower and upper bounds \cite{berger1977}, SLB, and the upper bound based on $R^{na}(D)$. We observe that for $p=0.25$, the upper bound based on $R^{na}(D)$ does slightly better than Berger's upper bound. However, for small values of $D$, we have observed 
\begin{figure}[ht]
\centering
\includegraphics[scale=0.5]{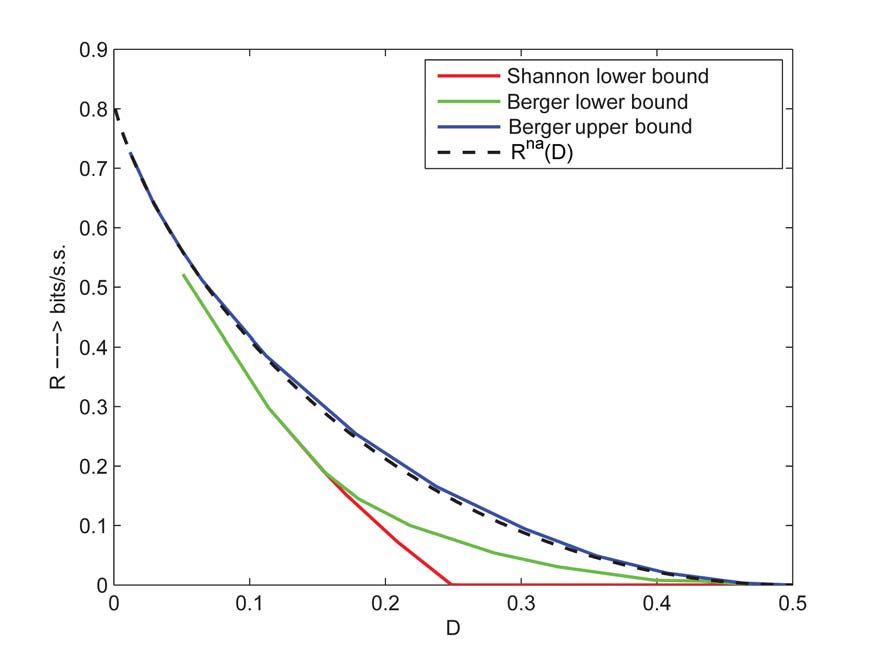}
\caption{$R(D)$ for BSMS($p$) for $0\leq{D}\leq{D}_c$ and Bounds for $p=0.25$.}
\label{figmakrov}
\end{figure}
that Berger's upper bound fails to be tight, while the one based on $R^{na}(D)$ is tight \cite[Sec. V.C]{stavrou-kourtellaris-charalambous2014aaa}.\\
\noi{\bf RL of Causal Codes.} By utilizing the bound $R^{na}(D)\geq{R}(D)$, we can deduce that the $RL$ of causal codes for the BSMS($p$) cannot exceed $R^{na}(D)-R(D)=H(m)-H(q),~0\leq{D}\leq{D}_c$, where $R(D)=H(q)-H(D)$, $p=1-q,~q\leq\frac{1}{2},~D\leq{D}_c=\frac{1}{2}\Big(1-\sqrt{1-\big(\frac{q}{p}\big)^2}\Big)$.  Note that the exact value of $RL$ is only given for the region $0\leq{D}\leq{D}_c$, where the exact solution of $R(D)$ is known. Beyond this region, upper and lower bounds for $RL$ can be found \cite[Sec. V.C]{stavrou-kourtellaris-charalambous2014aaa}.

\subsection{Multidimensional Gaussian Stationary Sources: Source-Channel Matching and Rate Loss}

\par In this section, we consider a vector partially observable Gaussian-Markov process and we compute explicitly the closed form expression of $R^{na}(D)$. This expression makes feasible the matching of the source to the channel.\\
\noi Consider the following multidimensional partially observed linear Gauss-Markov system
\begin{eqnarray}
\left\{ \begin{array}{ll} Z_{t+1}=AZ_t+BW_t,~Z_0=z,~t\in\mathbb{N}\\
X_t=CZ_t+NV_t,~t\in\mathbb{N} \end{array} \right.\label{equation51}
\end{eqnarray}
where $Z_t\in\mathbb{R}^m$ is the state (unobserved) process and $X_t\in\mathbb{R}^p$ is the information source, obtained from noisy measurements of $CZ_t$. In this application the objective is to compress the sensor data, which is the only observable information. Next, we introduce certain assumptions which are standard in infinite horizon Kalman Filter \cite{caines1988}, and they are also sufficient for existence of the limit, $\lim_{n\longrightarrow\infty}\frac{1}{n+1}R^{na}_{0,n}(D)$.\\
\noi{\bf (E1)} ($C,A$) is detectable and ($A,\sqrt{BB^{tr}}$) is stabilizable, ($N\neq0$); {\bf(E2)} the state and observation noise $\{(W_t,V_t):t\in\mathbb{N}\}$ are Gaussian IID vectors $W_t\in\mathbb{R}^k$, $V_t\in\mathbb{R}^d$, mutually independent with parameters $N(0,I_{k\times{k}})$ and $N(0,I_{d\times{d}})$, independent of the Gaussian RV $Z_0$, with parameters $N(\bar{z}_0,\bar{\Sigma}_0)$; {\bf (E3)} the distortion function is single letter defined by $d_{0,n}(x^n,{y}^n)\tri\sum_{t=0}^n||x_t-{y}_t||_{\mathbb{R}^p}^2$.\\ 
\noi According to Theorem~\ref{th6}, the optimal stationary reproduction distribution is given for $s\leq{0}$ by
\begin{align}
{P}^*_{{Y}_t|Y^{t-1},X_t}(d{y}_t|y^{t-1},x_t)=\frac{e^{s||{y}_t-x_t||_{\mathbb{R}^p}^2}P^*_{{Y}_t|{Y}^{t-1}}(d{y}_t|{y}^{t-1})}{\int_{{\cal Y}_t}e^{s||{y}_t-x_t||_{\mathbb{R}^p}^2}P^*_{{Y}_t|{Y}^{t-1}}(d{y}_t|{y}^{t-1})}.\label{eq.9}
\end{align}
Note that the exponential quadratic term in (\ref{eq.9}) implies that $P_{{Y}_t|{Y}^{t-1},X_t}(\cdot|{y}^{t-1},x_t)$ is conditionally Gaussian (using completion of squares if necessary). Hence, the channel connecting $\{X_t:t\in\mathbb{N}\}$ to $\{{Y}_t:t\in\mathbb{N}\}$ has the general form
\begin{eqnarray}
{Y}_t=\bar{A}X_t+\bar{B}{Y}^{t-1}+V^c_t,~t\in\mathbb{N}\label{eq.10}
\end{eqnarray}
where $\bar{A}\in\mathbb{R}^{p\times{p}}$, $\bar{B}\in\mathbb{R}^{p\times{t}p}$, and $\{V^c_t:~t\in\mathbb{N}\}$ is an independent sequence of Gaussian vectors with zero mean and covariance $cov(V^c_t)=Q=diag\{q_1,\ldots,q_p\}$.  Consider a pre-encoder introducing the Gaussian error process $\{K_t:~t\in\mathbb{N}\}$, $K_t\triangleq{X}_t-\mathbb{E}\{X_t|{Y}^{t-1}\}$ and its steady state covariance $\Lambda_\infty$, $\Lambda_\infty=\lim_{n\rightarrow\infty}\Lambda_t$,~$\Lambda_t\triangleq\mathbb{E}\{K_tK_t^{tr}\},~t\in\mathbb{N}$. Let $E_\infty$ be a unitary matrix such that 
\begin{eqnarray}
E_\infty\Lambda_\infty{E}_\infty^{tr}=diag\{\lambda_{\infty,1},\ldots\lambda_{\infty,p}\},~\Gamma_t\triangleq{E}_\infty{K}_t,~t\in\mathbb{N}.\label{equation53}
\end{eqnarray}
\noi Analogously, introduce the process $\{\tilde{K}_t:~t\in\mathbb{N}\}$ defined by $\tilde{K}_t\tri{Y}_t-\mathbb{E}\{X_t|{Y}^{t-1}\}\equiv{Y}_t-\widehat{X}_{t|t-1}$,~$\tilde{\Gamma}_t=E_\infty\tilde{K}_t$. It is easily shown that $d_{0,n}(X^n,{Y}^n)=d_{0,n}(K^n,\tilde{K}^n)=\sum_{t=0}^n||\tilde{K}_t-K_t||_{\mathbb{R}^p}^2=\sum_{t=0}^n||\tilde{\Gamma}_t-\Gamma_t||_{\mathbb{R}^p}^2$.
\begin{figure}[ht]
\centering
\includegraphics[scale=0.5]{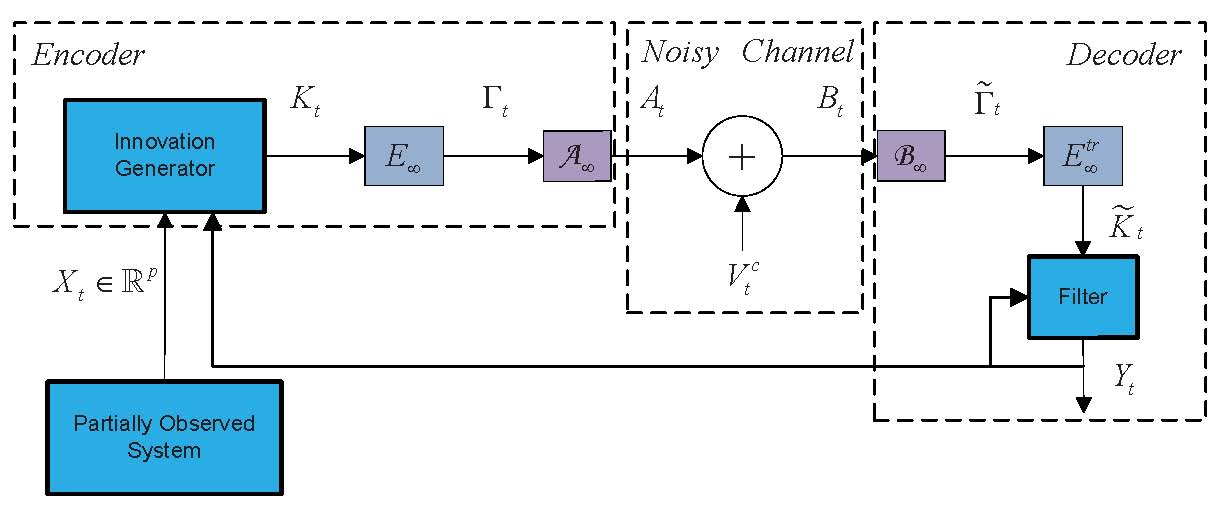}
\caption{Realization of the optimal stationary reproduction distribution.}
\label{discrete_time_communication_system}
\end{figure}
\noi Using basic properties of conditional entropy we can show that $R^{na}(D)=\lim_{n\longrightarrow\infty}\frac{1}{n+1}R_{0,n}^{na,K^n,\tilde{K}^n}(D)=\lim_{n\longrightarrow\infty}\frac{1}{n+1}R_{0,n}^{na,\Gamma^n,\tilde{\Gamma}^n}(D)$. Next, we state the main result.
\begin{theorem}\label{solution_gaussian}
Under Assumptions ({\bf E1})-({\bf E3}), the information NRDF rate for (\ref{equation51}) is given by
\begin{align}
R^{na}(D)=\frac{1}{2}\sum_{i=1}^p\log\Big{(}\frac{\lambda_{\infty,i}}{\delta_{\infty,i}}\Big{)}\nonumber
\end{align}
where $diag\{\lambda_{\infty,1},\ldots,\lambda_{\infty,p}\}=\lim_{t\longrightarrow\infty}E_t\Lambda_tE^{tr}_{t}=E_\infty{\Lambda}_\infty{E}_\infty^{tr}$,~$\Lambda_\infty=C\Sigma_{\infty}C^{tr}+NN^{tr}$
\begin{eqnarray}
\delta_{\infty,i} \tri\left\{ \begin{array}{ll} \xi_\infty & \mbox{if} \quad \xi_\infty\leq\lambda_{\infty,i} \\
\lambda_{\infty,i} &  \mbox{if}\quad\xi_\infty>\lambda_{\infty,i} \end{array} \right.,~i=2,\ldots,p\nonumber
\end{eqnarray}
and $\xi_\infty$ is chosen such that $\sum_{i=1}^p\delta_{\infty,i}=D$. Define $H_\infty=\lim_{t\longrightarrow\infty}H_t,~H_t\tri{d}iag\{\eta_{t,1},\ldots,\eta_{t,p}\},~\eta_{t,i}=1-\frac{\delta_{t,i}}{\lambda_{t,i}},~i=1,\ldots,p,~{\cal B}_{\infty}=\lim_{t\longrightarrow\infty}{\cal B}_t=\sqrt{H_\infty\Delta_\infty{Q}^{-1}},~{\cal B}_t\triangleq\sqrt{H_t\Delta_tQ^{-1}},~\Delta_\infty=\lim_{t\longrightarrow\infty}\Delta_t,~\Delta_t=diag\{\delta_{t,1},\ldots,\delta_{t,p}\}, t\in\mathbb{N}$. Moreover, $\Sigma_\infty$ is the steady state covariance of the error $Z_t-\mathbb{E}\{Z_t|{Y}^{t-1}\}\sim{N}(0,\Sigma_\infty)$ of the Kalman filter given by
\begin{align}
&\widehat{Z}_{t+1|t}=A\widehat{Z}_{t|t-1}\nonumber\\
&+A\Sigma_\infty(E_\infty^{tr}H_{\infty}E_{\infty}C)^{tr}M_\infty^{-1}\big({Y}_t-C\widehat{Z}_{t|t-1}\big)\nonumber\\
&\Sigma_{\infty}=A\Sigma_\infty{A}^{tr}\nonumber\\
&-A\Sigma_{\infty}(E_\infty^{tr}H_\infty{E}_{\infty}C)^{tr}M_{\infty}^{-1}(E_{\infty}^{tr}H_{\infty}E_{\infty}C)\Sigma_{\infty}A^{tr}+BB_{\infty}^{tr}\nonumber\\
&M_\infty=E_\infty^{tr}H_\infty{E}_{\infty}C\Sigma_{\infty}(E_{\infty}^{tr}H_{\infty}E_{\infty}C)^{tr}\nonumber\\
&+E_{\infty}^{tr}H_{\infty}E_{\infty}NN^{tr}(E_{\infty}^{tr}H_{\infty}E_{\infty})^{tr}+E_{\infty}^{tr}{\cal B}_{\infty}Q{\cal B}_{\infty}^{tr}E_\infty\nonumber
\end{align}
where $\widehat{Z}_{t|t-1}\tri\mathbb{E}\{Z_t|{Y}^{t-1}\}$ and $\hat{Z}_0 =\mathbb{E}\{ Z_0| Y^{-1}\}, Z_0-\hat{Z}_0\sim{N}(0,\Sigma_\infty)$.
\end{theorem}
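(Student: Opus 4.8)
The plan is to reduce the multidimensional partially observed problem to a sequence of parallel scalar problems by exploiting the reductions already set up in the text, and then to invoke Theorem~\ref{alternative_expression} (the SLB-type lower bound) together with a matching upper bound realized by the Gaussian channel (\ref{eq.10}). First I would note that, by the identities $d_{0,n}(X^n,Y^n)=d_{0,n}(K^n,\tilde K^n)=\sum_{t=0}^n\|\tilde\Gamma_t-\Gamma_t\|^2$ and $R^{na}(D)=\lim_n\frac{1}{n+1}R^{na,\Gamma^n,\tilde\Gamma^n}_{0,n}(D)$ already established, it suffices to compute the NRDF for the innovations process $\{\Gamma_t\}$ under mean-square distortion. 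Because $E_\infty$ diagonalizes $\Lambda_\infty$, the steady-state innovations covariance is $\mathrm{diag}\{\lambda_{\infty,1},\dots,\lambda_{\infty,p}\}$, so the problem decouples into $p$ scalar Gauss-Markov-type tracking problems with per-coordinate distortion allocations $\delta_{\infty,i}$ satisfying $\sum_i\delta_{\infty,i}=D$.

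Second, for the decoupled problem I would apply Theorem~\ref{alternative_expression} with $\rho$ the quadratic distortion and a Gaussian test choice of $\lambda_i$: completing the square in the exponential-quadratic kernel (\ref{eq.9}) shows the optimal reproduction channel has the affine-Gaussian form (\ref{eq.10}), and a standard conditional-entropy computation gives, per coordinate, the lower bound $\frac{1}{2}\log(\lambda_{\infty,i}/\delta_{\infty,i})$ whenever $\delta_{\infty,i}\le\lambda_{\infty,i}$ and $0$ otherwise. Summing over $i$ yields $R^{na}(D)\ge\frac12\sum_i\log(\lambda_{\infty,i}/\delta_{\infty,i})$. For the reverse inequality I would exhibit the achieving reproduction distribution explicitly: take the conditionally Gaussian channel $Y_t=\bar A X_t+\bar B Y^{t-1}+V^c_t$ with the gain matrices chosen so that, in the $\Gamma$-coordinates, coordinate $i$ is a reverse-waterfilling Gaussian test channel with noise variance matched to $\delta_{\infty,i}$; the filter equations for $\Sigma_\infty$ in the statement are exactly the steady-state Kalman recursions for $Z_t-\mathbb{E}\{Z_t\mid Y^{t-1}\}$ driven by this channel, so one must verify that $\Lambda_\infty=C\Sigma_\infty C^{tr}+NN^{tr}$ is the induced innovations covariance, closing the fixed-point loop.

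Third, I would address existence and attainment of the steady state: Assumptions (E1)--(E2) (detectability of $(C,A)$, stabilizability of $(A,\sqrt{BB^{tr}})$) guarantee that the Kalman Riccati recursion converges to a unique stabilizing $\Sigma_\infty$, hence $\Lambda_t\to\Lambda_\infty$, $E_t\to E_\infty$, $H_t\to H_\infty$, ${\cal B}_t\to{\cal B}_\infty$, and the limit defining $R^{na}(D)$ exists; here I would lean on Corollary~\ref{corollary} for the existence of the per-letter limit and on the infinite-horizon Kalman filter theory cited as \cite{caines1988}. The reverse-waterfilling level $\xi_\infty$ is then determined uniquely by the monotone constraint $\sum_i\delta_{\infty,i}=D$, and the active-coordinate structure ($\delta_{\infty,i}=\xi_\infty$ when $\xi_\infty\le\lambda_{\infty,i}$, else $\delta_{\infty,i}=\lambda_{\infty,i}$) follows from the KKT conditions of minimizing $\sum_i\frac12\log(\lambda_{\infty,i}/\delta_{\infty,i})$ over $\delta_{\infty,i}\in(0,\lambda_{\infty,i}]$ subject to the sum constraint.

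The main obstacle I expect is the consistency of the fixed point: one must show that the reproduction channel constructed to meet the distortion allocation $\{\delta_{\infty,i}\}$ induces precisely the innovations covariance $\Lambda_\infty=C\Sigma_\infty C^{tr}+NN^{tr}$ used to define those allocations, i.e. that the coupled pair (Riccati equation for $\Sigma_\infty$, waterfilling over $\Lambda_\infty$) admits a simultaneous solution with the stated gains $H_\infty,{\cal B}_\infty,M_\infty$. Verifying that the $M_\infty$ given in the statement is the correct innovations-covariance normalizer in the transformed coordinates, and that the resulting filter is stabilizing, is the delicate bookkeeping step; everything else is either a direct application of Theorem~\ref{alternative_expression}, a completion-of-squares Gaussian computation, or standard infinite-horizon Kalman filtering.
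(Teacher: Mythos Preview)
Your proposal is correct and follows essentially the same approach the paper sets up: the reduction to the diagonalized innovations process $\{\Gamma_t\}$ via the identities preceding the theorem, the use of Theorem~\ref{alternative_expression} to obtain the SLB-type lower bound (as announced in the introduction under (R4)), achievability through the conditionally Gaussian channel (\ref{eq.10}) with reverse-waterfilling allocation, and closure of the fixed point via the steady-state Kalman filter guaranteed by (E1)--(E2). The paper defers the detailed computations to \cite[App.~F]{stavrou-kourtellaris-charalambous2014aaa}, but the scaffolding it lays out before the theorem matches your plan step for step.
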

\begin{proof}
The proof is found in \cite[App. F]{stavrou-kourtellaris-charalambous2014aaa}.
\end{proof}
\noi{\bf Source-Channel Matching.} In view of Fig.~\ref{discrete_time_communication_system}, the conditional distribution of NRDF is realized via an encoder-channel-decoder. Moreover, the channel consists of parallel additive Gaussian noisy channels with feedback defined by
\begin{align*}
B_{t,i}=A_{t,i}(X_t,B^{t-1})+V_{t,i}^{c},~t\in\mathbb{N},~i=1,\ldots,p.
\end{align*}
Recall that the capacity of a parallel memoryless Gaussian channel with feedback subject to a power constraint $\frac{1}{n+1}\mathbb{E}\{\sum_{t=0}^n||A_t||_{\mathbb{R}^p}\leq{P}\}$, is given by $C(P)=\lim_{n\rightarrow\infty}\frac{1}{2}\frac{1}{n+1}\sum_{t=0}^n\sum_{i=1}^p\log|1+\mathbb{E}\{(A_{t,i})^2\}Q^{-1}|=\frac{1}{2}\sum_{i=1}^p\log(1+\frac{P_{\infty,i}}{q_i})$, $\sum_{i=0}^p{P}_{\infty,i}=P$, ${P}_{\infty,i}=\lim_{n\rightarrow\infty}{E}\{(A_{t,i})^2\}$. As a result, for a given $D\geq{0}$, we can let $P=D$, i.e., $\frac{P_{\infty,i}}{q_i}=\frac{\lambda_{\infty,i}}{\delta_{\infty,i}}-1$, then $C(P)=R^{na}(D)$, and the end-to-end distortion is satisfied.\\
\noi{\bf RL of Zero-Delay Codes.} The source distribution $\{X_t:~t\in\mathbb{N}\}$ in (\ref{equation51}) is Gaussian, hence we can compute the OPTA by noncausal codes, $R(D)$, by using power spectral density expression \cite{berger}. The RL of causal and zero-delay codes with respect to the noncausal codes is precisely $\frac{1}{2}\sum_{i=1}^p\log\Big{(}\frac{\lambda_{\infty,i}}{\delta_{\infty,i}}\Big{)}-R(D)$ bits/sample. 
\section*{Acknowledgement}
This work was financially supported by a medium size University of Cyprus
grant entitled ``DIMITRIS"  and by QNRF, a member of Qatar Foundation, under the project NPRP 6-784-2-329.

\bibliographystyle{IEEEtran}
\bibliography{photis_references_isit2014}

\end{document}